\documentclass{article}

\usepackage{geometry}
\usepackage{hyperref}
\usepackage{lmodern}
\usepackage{animate}
\usepackage{tikz}
\usepackage{tikz-cd}
\usetikzlibrary{shapes.misc, positioning}
\usetikzlibrary{arrows,shapes}
\usetikzlibrary{babel}
\usepackage{mathrsfs,amsmath}
\usepackage{amsthm}
\usepackage{physics}
\usepackage{amsfonts}
\usepackage{amssymb}
\usepackage{graphicx}
\usepackage{wrapfig}
\usepackage{cancel}
\usepackage{braket}
\usepackage[all]{xy}
\usepackage{xcolor}
\usepackage{adjustbox}
\usepackage{subcaption}
\usepackage{capt-of}
\usepackage{tabu}
\makeatletter
\def\fdsy@scale{1.8}
\newcommand\fdsy@mweight@normal{Book}
\newcommand\fdsy@mweight@small{Book}
\newcommand\fdsy@bweight@normal{Medium}
\newcommand\fdsy@bweight@small{Medium}
\DeclareFontFamily{U}{FdSymbolC}{}
\DeclareFontShape{U}{FdSymbolC}{m}{n}{
    <-7.1> s * [\fdsy@scale] FdSymbolC-\fdsy@mweight@small
    <7.1-> s * [\fdsy@scale] FdSymbolC-\fdsy@mweight@normal
}{}
\DeclareFontShape{U}{FdSymbolC}{b}{n}{
    <-7.1> s * [\fdsy@scale] FdSymbolC-\fdsy@bweight@small
    <7.1-> s * [\fdsy@scale] FdSymbolC-\fdsy@bweight@normal
}{}
\makeatother
\DeclareFontFamily{U}{FdSymbolA}{}
\DeclareFontShape{U}{FdSymbolA}{m}{n}{<->FdSymbolA-Book}{}
\DeclareSymbolFont{extrasymbols}{U}{FdSymbolA}{m}{n}
\DeclareMathSymbol{\vardiamondsuit}{\mathord}{extrasymbols}{182}
\DeclareMathSymbol{\varheartsuit}{\mathord}{extrasymbols}{184}

\definecolor{color1}{RGB}{12,18,155}
\definecolor{color2}{RGB}{83,89,155}
\definecolor{color3}{RGB}{75,80,85}
\newtheorem{theorem}{Theorem}[section]

\newtheorem{proposition}[theorem]{Proposition}

\begin{document}

\title{\textbf{Relative entropy of an interval for a massless boson at finite temperature}}

\author{Alan Garbarz$^\diamondsuit{}^\perp$, Gabriel Palau$^\diamondsuit$}
\date{}
\maketitle

\vspace{.25cm}

\begin{minipage}{.9\textwidth}\small \it 
	\begin{center}
    $^\diamondsuit$  Universidad de Buenos Aires, Facultad de Ciencias Exactas y Naturales, Departamento de Física. Ciudad Universitaria, pabell\'on 1, 1428, Buenos Aires, Argentina.
     \end{center}
\end{minipage}

\vspace{.25cm}

\begin{minipage}{.9\textwidth}\small \it 
	\begin{center}
    $^\perp$ CONICET - Universidad de Buenos Aires, Instituto de Física de Buenos Aires (IFIBA). Ciudad Universitaria, pabell\'on 1, 1428, Buenos Aires, Argentina.
     \end{center}
\end{minipage}
\vspace{.5cm}

\maketitle

\begin{abstract}
    We  compute  Araki's relative entropy associated to a bounded interval $I=(a,b)$ between a thermal state and a coherent excitation of itself in the bosonic U(1)-current model, namely the (derivative 
    of the) chiral boson. For this purpose we briefly review some recent results on the entropy of standard subspaces and on the relative entropy of non-pure states such as thermal states. In particular, recently Bostelmann, Cadamuro and Del Vecchio have obtained the relative entropy at finite temperature for the unbounded interval $(-\infty,t)$, using  previous results of Borchers and Yngvason, mainly a unitary dilation that provides the modular evolution in the negative half-line. Here we find a unitary rotation in order to make use of the full PSL$(2,\mathbb{R})$ symmetries and obtain the modular group, modular Hamiltonian and the relative entropy  $S$ of a bounded interval at finite temperature. Such relative entropy entails both a Bekenstein-like bound and a QNEC-like bound, but violates  $S''\geq 0$. Finally, we extend the results to the free massless boson in $1+1$ dimensions with analogous bounds.  
    \begin{flushleft}
\hrulefill\\
\footnotesize
{E-mails:  alan@df.uba.ar, gpalau@df.uba.ar}
\end{flushleft}
\end{abstract}

\pagebreak

\tableofcontents

\pagebreak

\section{Introduction}

The relative entropy is a measure of the distinguishability of  two states. In quantum mechanics, for two density matrices $\rho$ and $\sigma$ it is defined as
\begin{equation}
    S(\rho||\sigma)=\text{Tr}(\rho \log \rho-\rho \log \sigma)
    \label{QMrelativeentropy}
\end{equation}
If the system is bi-partite, one can reduce each density matrix to one of the partitions and compute the relative entropy between the reduced density matrices. The counterpart in QFT would be to reduce the states to a spacetime region. Intuitively, the smallest this region is, the lesser operators one has at hand to characterise the states and therefore the relative entropy decreases. However, generic bounded regions of spacetime such as causal diamonds are assigned a von Neumann algebra which is a Type III factor, and these algebras have no trace-class operators and there are no density matrices \cite{haag1996local}. Nevertheless, the definition of the relative entropy \eqref{QMrelativeentropy} can be suitably extended \cite{Araki:1973hh}: given two (normal and faithful) states $\omega$ and $\omega'$ of the von Neumann algebra $A$ associated to some region, with both states represented on a Hilbert space by $\Omega$ and $\Omega'$, the relative entropy is
\begin{equation}
        S_{A}(\omega||\omega')=-(\Omega,\log\Delta_{\Omega',\Omega}\,\Omega),
    \label{Arakientropy}
    \end{equation}
where $\Delta_{\Omega',\Omega}$ is the relative modular operator (which is defined in the next section). This is in sharp contrast with the entanglement entropy which necessarily diverges due to the generic UV behavior of correlations through the boundary of the corresponding region \cite{Witten:2018zxz}. 

Araki's relative entropy \eqref{Arakientropy} can be hard to compute in general cases. Recently it has been computed in a number of specific situations for coherent states acting on the vacuum of the free scalar QFT: on a Rindler wedge \cite{Longo:2019mhx,Ciolli:2019mjo,Casini:2019qst}, on a causal diamond in spacetime dimension greater than 2 \cite{Longo:2020amm}, and later the two-dimensional case  \cite{Longo:2021rag}. In addition, in the bosonic  vacuum U(1)-current model (namely, a massless boson on a light ray) the relative entropy of coherent states was computed for a half-line in \cite{Longo:2018obd} and used to obtain the analogous expression for (unbounded) regions of the null plane 
for a free scalar in \cite{Morinelli:2021nsx}. For similar results in free fermionic CFTs, see \cite{Longo:2017mbg}. For interacting theories, as far as we know the only results available correspond to coherent states in chiral CFTs \cite{Hollands:2019czd,Panebianco:2019plp}, while for  free QFT in curved spacetimes see \cite{Ciolli:2021otw}. 

All of the above computations compare the vacuum state to a coherent excitation of itself (with the exception of \cite{Ciolli:2021otw}). Recently in \cite{Bostelmann:2020srs}, among other things, the relative entropy for a half-line in the \textit{thermal} U(1)-current model was computed. That is, a KMS state of inverse temperature $\beta$ was compared to a coherent excitation of itself, when restricted to a half-line. The fact that the relative entropy was computed on the half-line is an important point, since by general arguments its modular operator had already been obtained in  \cite{Borchers:1998ye}. Even in the vacuum case, one can see that to obtain the modular operator for a bounded interval it is necessary to make use of the full PSL$(2,\mathbb{R})$  group of conformal transformations of the chiral boson theory (we will discuss this later in detail). 

The main goal 
of this article is to compute the relative entropy of coherent states on a bounded interval $I \subset \mathbb{R}$ for the thermal U(1)-current model and the free massless boson at finite temperature in $1+1$ dimensions restricted to a causal diamond. We will find two bounds for each theory, a Bekenstein-like bound and mainly a controlled violation of the Quantum Null Energy Condition (QNEC), which was already anticipated in \cite{Bostelmann:2020srs}.

\section{Preliminaries on the modular structure of the Weyl algebra}

Given a symplectic space $(\mathcal{K},\sigma)$ we get a CCR algebra with relations
    \begin{equation}
    W(f)W(g)=e^{-i\sigma(f,g)}W(f+g),\quad W(f)^*=W(-f)    
    \end{equation}
where $f,g \in \mathcal{K}$. We will refer to this algebra as CCR$(\mathcal{K},\sigma)$.

A positive symmetric bilinear form $\tau$ such that 
\begin{equation}\label{technical}
    \sigma(f,g)^2\leq \tau(f,f)\tau(g,g)
\end{equation}
defines a quasi-free state by \cite{Kay:1988mu} 
\begin{equation}
    \omega(W(f))=e^{-\frac{1}{2}\tau(f,f)}
\end{equation}
with 2-point function $w_2(f,g)=\tau(f,g)+i\sigma(f,g)$. However the state may not be pure. It is pure if and only if $w_2$ is a complex inner product. More precisely, from \eqref{technical} it can be shown that a contraction $D$ exists such that 
\begin{equation}
    \sigma(f,g)=\tau(f,D g).
    \label{D}
\end{equation}
Because of the non-degeneracy of $\sigma$, $ D$ is invertible and $D=C|D|$ is its polar decomposition . $C$ is a complex structure and the state $\omega$ is pure if and only if $w_2$ is a complex inner product, bi-linear with respect to the complex structure $C$. This is actually equivalent to the statement that $\omega$ is pure if and only if $|D|=1$ (see \cite{Petz:1990gb} for further details). Let us differ how to construct a purification of a non-pure state to the end of this section.

\subsection{Modular theory and relative entropy}

We assume we have a complex Hilbert space $\mathcal{H}$ with inner product $\langle f,g\rangle = w_2(f,g)$. In other words, we are assuming for now that the state is pure.  Let us call its (bosonic) Fock space $\Gamma(\mathcal{H})$. We have a representation of the CCR$(\mathcal{K},\sigma)$ algebra on the Fock space $\Gamma(\mathcal{H})$. Indeed, $W(f)$ acts on $\Gamma(\mathcal{H})$ as $V(f)$:
    \begin{equation}
        V(f)e^0=e^{-\frac{1}{2}\langle f,f\rangle}e^f,\qquad f \in \mathcal{H} 
    \end{equation}
with
\begin{equation}
    e^f:=1\oplus f\oplus\frac{1}{\sqrt{2!}} f \otimes f \oplus ...
\end{equation}
Calling $\Omega:=e^0$ the vacuum vector
\begin{equation}
    (\Omega,V(f)\Omega)=e^{-\frac{1}{2}\langle f,f\rangle}=\omega(W(f))
\end{equation}
We can define the local algebras associated to a given real-linear subspace $H\subset \mathcal{H}$:
\begin{equation}
    R(H):=\left\{V(f);\quad f\in H \right\}''
\end{equation}
It turns out that $\Omega$ is cyclic (respectively separating) for $R(H)$ if and only if $H$ is cyclic (respectively separating). $H$ is cyclic if $\overline{H+iH}=\mathcal{H}$ while separating if $H\cap iH=0$. If $H$ is also  closed it is called a \textit{standard} subspace .      
    
The Tomita operator $S$ associated to $R(H)$ (and $\Omega$) is defined by (the closure of)
    \begin{equation}
        S V(f) \Omega=V(f)^*\Omega,\quad  V(f)\in R(H) 
    \end{equation}
The \textit{relative} Tomita operator $S_{\Omega',\Omega}$ associated to $R(H)$ is defined by (the closure of)
    \begin{equation}
        S_{\Omega',\Omega} V(f) \Omega=V(f)^*\Omega'
    \end{equation}
with polar decomposition
\begin{equation}
    S_{\Omega',\Omega}=J_{\Omega',\Omega}\Delta_{\Omega',\Omega}^{1/2}
\end{equation}

For some algebra $R(H)$ and cyclic and separating states $\omega$ and $\omega'$, Araki's relative entropy is defined as
    \begin{equation}
        S_{A}(\omega||\omega')=-(\Omega,\log\Delta_{\Omega',\Omega}\,\Omega)
    \label{Arakirelativeentropy}
    \end{equation}
This is hard to compute for generic cases, however it  simplifies considerably for coherent states\footnote{The relative entropy between coherent states satisfies $S(\omega_f||\omega_g)=S(\omega_{f-g}||\omega)$, so there is no loss of generality in assuming that one state is not excited by a Weyl unitary.}, namely when $\Omega'=V(f)\Omega$, and we shall call the corresponding algebraic state $\omega_f$. In order to see this, we first need to introduce a modular theory for $H\subset \mathcal{H}$, with  $H$  \textit{standard}, following \cite{Longo08}. 
The analogous Tomita operator is defined by the closure of $S_H(f+ig)=f-ig$ and its polar decomposition is
\begin{equation}
    S_H=J_H\Delta_H^{1/2}
\end{equation}
Then, the entropy of a vector $f\in H$ w.r.t. $H$ is defined by
\begin{equation}
        S_H(f):=-\langle f,\log\Delta_H f\rangle  
\end{equation}
Actually, in \cite{Ciolli:2019mjo} it  was generalized for $f\in \mathcal{H}\cap \text{Dom}(K_H)$, 
\begin{equation}
    S_H(f):=-\langle f,P_H\log\Delta_H f\rangle=\sigma(f,P_H i K_H f),
    \label{longoentropy}
\end{equation}
where 
\begin{equation}
    K_H:=-\log \Delta_H=i\frac{d}{du}\Delta^{iu}_H\bigg|_{u=0},
\end{equation}
is the 1-particle modular Hamiltonian and where $P_H:H+H'\rightarrow H$ is the cutting projector and we are assumming that $H$ is \textit{factorial}, namely $H \cap H'=0$. Here $H'$ is the symplectic complement of $H$. In that work they showed that Araki's relative entropy \eqref{Arakirelativeentropy}, between a coherent state $V(f)\Omega$ and a \textit{pure
} state $\Omega$, is nothing but the entropy of the vector $f\in \mathcal{H} \cap \text{Dom}(K_H)$:
\begin{equation}
    S(\omega_f||\omega)=\sigma(f,P_H i K_H f)
    \label{relativeentropy}
\end{equation}
We will take advantage of this result to compute the relative entropy by working exclusively at the level of the 1-particle Hilbert space $\mathcal{H}$.

\subsection{Purification of $\omega$}

Since we are interested in computing the relative entropy in the case that $\omega$ is a thermal state, and since \eqref{relativeentropy} is valid for pure states, we need to work with a purification of $\omega$. This can be achieved by a procedure we recall in this subsection, following \cite{Bostelmann:2020srs} (see also \cite{Petz:1990gb}).

By means of \eqref{D} we can get\footnote{We assume that $\mathcal{K}$ is complete with respect to $\tau$ and that $\sigma$ is non-degenerate.} a complex structure on $\mathcal{K}^\oplus:=\mathcal{K}\oplus \mathcal{K}$ 
\begin{equation}
    i^\oplus=\left(
    \begin{matrix}
    -D& C\sqrt{1+D^2}\\
    C\sqrt{1+D^2} & D
    \end{matrix}\right)
    \label{iplus}
\end{equation}
Let us call $\mathcal{H}^\oplus$ the complexification of $\mathcal{K}^\oplus$. The complex inner product is given by
    \begin{equation}
        \langle \cdot ,\cdot\rangle^\oplus=\tau^\oplus(\cdot,\cdot ) + i \sigma^\oplus(\cdot,\cdot ) ,
    \end{equation} 
with $\tau^\oplus:=\tau \oplus \tau$ and $\sigma^\oplus(\cdot,\cdot)=\tau^\oplus(\cdot,-i^\oplus \cdot)$. This inner product  reduces to $w_2$ on $\mathcal{K}\simeq \mathcal{K}\oplus 0$:
\begin{equation}
    \langle f\oplus 0,g\oplus 0\rangle^\oplus=\tau(f,g)+i\sigma(f,g)=w_2(f,g)
    \end{equation}
We have 
    \begin{equation}
        \text{CCR}(\mathcal{K},\sigma)\subset \text{CCR}(\mathcal{K}^\oplus,\sigma^\oplus) 
    \end{equation} 
Similarly, for closed subspaces $H\subset \mathcal{K}$:
    \begin{equation}
        \text{CCR}(H,\sigma)\subset \text{CCR}(\mathcal{K},\sigma) 
    \end{equation} 
And most importantly on the CCR$(\mathcal{K},\sigma)$ algebra,
    \begin{equation}
        \omega^\oplus(W(f\oplus 0))=e^{-\frac{1}{2}\tau^\oplus(f\oplus 0,\,f\oplus 0)}=\omega(W(f))
    \end{equation}
which justifies why the pure state $\omega^\oplus$ of the  CCR$(\mathcal{K}^\oplus,\sigma^\oplus)$ algebra is a purification of $\omega$. 

The relative entropy for non-pure states associated to $R(H)$, with $H\simeq H \oplus 0$ the standard and factorial subspace of $\mathcal{H}^\oplus$ reads \cite{Bostelmann:2020srs}, 
\begin{equation}
    S_{R(H)}(\omega_f||\omega)=\sigma^\oplus(f ,P_H  i^\oplus  K_H f),\qquad f\in \mathcal{K}\oplus 0 \cap \text{Dom}(K_H).
    \label{relativeentropypurified}
\end{equation}
In \cite{Bostelmann:2020srs} a more general expression was obtained for other subspaces. We will not need this since for the $U(1)$ model the subspace associated to the bounded interval is standard and factorial as we will show. Note that in \eqref{relativeentropypurified} the modular Hamiltonian $K_H$ is associated to the modular operator that acts on the larger space $\mathcal{H}^\oplus$ while $P_H$ is the the real-linear cutting projector onto $H \oplus 0$.

\section{The chiral boson current}

We are interested in applying all the above to the case of a free chiral boson. More precisely, we consider the current usually denoted $\phi'(x)$. In the smeared version, the classical theory is defined by the symplectic space of compactly supported real functions  $\mathcal{K}=C_c^\infty(\mathbb{R})$ with symplectic structure
\begin{equation}
    \sigma(f,g)= \int_{\mathbb{R}} f(x)g'(x)dx
    \label{symplectic}
\end{equation}
As reviewed in the previous section, we have a CCR$(\mathcal{K},\sigma)$ algebra associated to $(\mathcal{K},\sigma)$. In order to proceed, we need to define a quasi-free state by means of a positive symmetric bilinear form $\tau$. We start with the vacuum state which we will denote $\omega$ and then move on to a thermal state $\omega_\beta$.

\subsection{The vacuum case}
The vacuum state is defined by
        \begin{equation}
            \tau(f,g)=-\frac{1}{\pi}PV\int_{\mathbb{R}^2}dx\,dy\frac{f(x)g(y)}{(x-y)^2}=(f,\mathfrak{H}g')_{L^2}
        \end{equation}
Here $PV$ denotes the principal value integral and $\mathfrak{H}$ the Hilbert transform. It is pure since $D=-\mathfrak{H}$ which is unitary or equivalently $D=-i \text{sgn}(p)$ in momentum space\footnote{We are taking the Fourier transform as $\hat{f}(p)= \int_{\mathbb{R}}dx\,e^{ixp}f(x)$. With this convention the Hilbert transform in momentum space is $i$ sgn$(p)$. More importantly, the generator of unitary translations \eqref{unitaryrep} is positive. \label{footnoteFourier}}, and therefore the complexification described in the previous section can be applied directly to $\mathcal{K}$ and we obtain the complex inner product
        \begin{align}
            \langle f,g \rangle&=-\frac{1}{\pi}\int_{\mathbb{R}^2}dx\,dy\frac{f(x)g(y)}{(x-y-i\epsilon)^2}=(f,\mathfrak{H}g')_{L^2}+i\sigma(f,g)\\
         &=\frac{1}{\pi}\int_{0}^\infty\hat{f}(p)^*\hat{g}(p)pdp
        \end{align}
which enables to establish an isomorphism with  $\mathcal{H}\simeq L^2(\mathbb{R}_+,pdp)$. This is the vacuum 1-particle Hilbert space. We will mostly work in coordinate space. 

An important role in this work is played by the PSL$(2,\mathbb{R})$ symmetry of the model. The unitary representation on $\mathcal{H}$ is given by,
        \begin{equation}
            U(g)f(x)=f(g^{-1}\cdot x)
        \end{equation}
with an element of the group $g$ acting on the coordinate $x$ by linear fractional transformations: 
\begin{equation}
            g\cdot x=\frac{ax+b}{cx+d},\qquad g\in \text{PSL}(2,\mathbb{R}).
\end{equation}
We have three one-parametric subgroups related to the KAN decomposition of PSL$(2,\mathbb{R})$
        \begin{equation}
            r(\theta)=\left(
    \begin{matrix}
    \cos\frac{\theta}{2} & \sin\frac{\theta}{2}\\
    -\sin\frac{\theta}{2} & \cos\frac{\theta}{2}
    \end{matrix}\right),\quad \delta(s)=\left(
    \begin{matrix}
    e^{\frac{s}{2}} & 0\\
    0 & e^{-\frac{s}{2}}
    \end{matrix}\right),\quad \tau(t)=\left(
    \begin{matrix}
    1 & t\\
    0 & 1
    \end{matrix}\right),  
        \end{equation} 
which we will refer to as rotatation, dilation and translation subroups, respectively. For example the dilation-translation subgroup acts as:
        \begin{equation}
            U(\tau(t))f(x)=f(x-t),\qquad U(\delta(s))f(x)=f(e^{-s}x)
            \label{unitaryrep}
        \end{equation}
Note that the generator  of translations $P$ defined by $U(\tau(t))=e^{itP}$ is positive. This is most easily seen in momentum-space with the convention of footnote \ref{footnoteFourier}. 

Now we can discuss the modular theory of standard subspaces of $\mathcal{H}$. Consider the subspaces $H(I)=\overline{C_c^\infty(a,b)}\in \mathcal{H}$, which are standard and factorial \cite{Longo08}. Since the above representation has positive $P$, the modular evolution on $H(\mathbb{R}_+)$ is (Theorem 3.3.1 of \cite{Longo08})
        \begin{equation}
            \Delta_{(0,\infty)}^{iu}=U\left(\delta\left(-2\pi u \right)\right),\quad u\in \mathbb{R} 
        \end{equation}
This can be seen by checking that  $F(u):=\langle g,\Delta^{iu} f\rangle$ admits an analytic continuation to the strip $-1<$ Im$(u)<0$  and that the KMS  property at temperature $-1$ is satisfied: 
\begin{equation}
    \langle\Delta_{(0,\infty)}^{iu}f,g\rangle=\langle g,\Delta_{(0,\infty)}^{i(u-i)}f\rangle
\end{equation}
By covariance $U(g)H(I)=H(g\cdot I)$ we have
        \begin{equation}
            \Delta_I^{iu}=U(\Bar{g})^{-1}\Delta_{(0,\infty)}^{iu}U(\Bar{g}),\quad u\in \mathbb{R},\quad \Bar{g}\cdot I= \mathbb{R}_+
            \label{Imodular}
        \end{equation} 
Note that $\Bar{g}$ is defined modulo multiplication by a dilation on the left, but this does not affect $\Delta_I^ {iu}$.

\subsubsection*{The interval $(-\infty,t)$}
For instance, by considering $\bar{g}$ a rotation in $\pi$ followed by a translation of $t$,
        \begin{equation}
            \Delta^{iu}_{(-\infty,t)}f(x)=f\left(e^{-2\pi u}(x-t)+t\right).
            \label{Bostelmannvacuumflow}
        \end{equation} 
We can compute then the modular Hamiltonian:
        \begin{equation}
            K_{(-\infty,t)}=i\frac{d}{du}\Delta_{(-\infty,t)}^{iu}\bigg|_{u=0}=2\pi i(t-x) \frac{d}{dx}
        \end{equation}
And from here the relative entropy of a coherent state:
    \begin{equation}
        S_{R((-\infty,t))}(\omega||\omega_f)=S_{(-\infty,t)}(f)=2\pi\int_{-\infty}^t (t-x)f'(x)^2 dx 
    \end{equation}
Here $f$ need not be localized in the interval $(-\infty,t)$ \cite{Bostelmann:2020srs}.

\subsubsection*{The interval $(a,b)$}

We can repeat what we have just done for an interval $I=(a,b)$. The first step is to find $\Bar{g}$ such that $\Bar{g}\cdot I=\mathbb{R}_-$. The idea is to first project $I$ to the circle, forming an arc $(\theta_a,\theta_b)$. Then, we employ two symmetries (see Figure \ref{figure:gbar}). The first one consists of a rotation that maps $I$ to $(-\infty,b')$, which can be achieved by noticing that in the circle this is just a rotation of magnitude $-\pi-\theta_a$. Then, $(-\infty,b')$ is mapped by a translation of magnitude $b'$ to $\mathbb{R}_-$.

\begin{figure}[h]
\centering
\includegraphics[width=11cm]{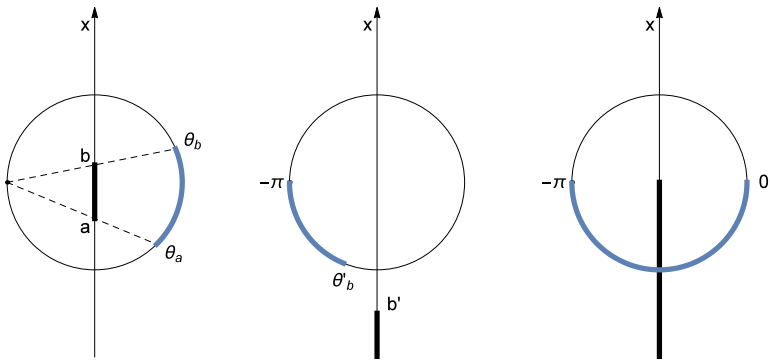}
\caption{In the left, a generic interval $(a,b)$ projected to the circle. It is first mapped to $(-\infty,b')$ by a rotation of magnitude $-\pi-\theta_a$, as shown in the middle diagram. Then it is mapped by a translation by $b'$ to $\mathbb{R}_-$, on the right.}
\label{figure:gbar}
\end{figure}

Having obtained such $\Bar{g}$, we can compute the modular evolution using \eqref{Imodular},
\begin{equation}
    \Delta^{iu}_{(a,b)}f(x)=f\left(\frac{a e^{-2\pi u}(b-x)+b (x-a)}{e^{-2\pi u} (b-x)+x-a} \right).
\end{equation}
The modular Hamiltonian is
     \begin{equation}
         K_{(a,b)}=\frac{2\pi i (x-a)(b-x)}{b-a}\frac{d}{dx}.
         \label{IModHam}
     \end{equation}
The last ingredient to compute the relative entropy is the cutting projector $P_{I}$ associated to the interval $I=(a,b)$. Following the same lines in the proof of Proposition 4.2 in \cite{Bostelmann:2020srs} one can see that $P_{I}K_{I}f(x)=\chi_{I}(x)K_{I}f(x)$ where $\chi_{I}$ is the characteristic function of the interval. 
\begin{proposition}
    Let $f\in C^{\infty}_{c}(\mathbb{R})$, then it holds that $P_{I}K_{I}f(x)=\chi_{I}(x)K_{I}f(x)$ 
\end{proposition}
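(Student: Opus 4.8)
The plan is to turn the statement into an explicit, local computation and then to identify the two pieces into which the cutting projector splits $K_I f$. First I would record from \eqref{IModHam} the explicit form
\begin{equation}
K_I f(x)=\frac{2\pi i\,(x-a)(b-x)}{b-a}\,f'(x)=i\,V(x)f'(x),\qquad V(x):=\frac{2\pi(x-a)(b-x)}{b-a},
\end{equation}
and observe the single structural fact that drives everything: the coefficient $V$ vanishes (linearly) at both endpoints $x=a$ and $x=b$. Since $f\in C^\infty_c(\mathbb{R})$, the function $Vf'$ is compactly supported and Lipschitz, and it vanishes at $a$ and $b$; the factor $i$ reflects only the complex structure and plays no role in the support analysis below, exactly as in the proof of Proposition~4.2 of \cite{Bostelmann:2020srs}.

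The candidate decomposition I would test is
\begin{equation}
K_I f=\chi_I\,K_I f+\chi_{I^{c}}\,K_I f,\qquad I^{c}:=\mathbb{R}\setminus\overline{I},
\end{equation}
and the goal is to show that the first summand lies in $H(I)$ and the second in the symplectic complement $H(I)'$. The complement piece is the easy half. Because $\chi_{I^{c}}K_I f$ is supported in $I^{c}$, for every $h\in C^\infty_c(a,b)$ the integrand of $\sigma(\chi_{I^{c}}K_I f,h)=\int \chi_{I^{c}}K_I f\,h'\,dx$ is supported on disjoint sets and the pairing vanishes; since $\sigma=\operatorname{Im}\langle\cdot,\cdot\rangle$ is continuous and $C^\infty_c(a,b)$ is dense in $H(I)$, this gives $\chi_{I^{c}}K_I f\in H(I)'$. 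Moreover $\chi_{I^{c}}K_I f$ is a compactly supported Lipschitz function, hence of finite $\mathcal{H}$-norm and a genuine element of $\mathcal{H}$.

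The real work --- and the step I expect to be the main obstacle --- is to show that $\chi_I K_I f\in H(I)=\overline{C^\infty_c(a,b)}$, i.e. that the cut piece is a limit, in the norm $\|g\|^2=\tfrac1\pi\int_0^\infty|\hat g(p)|^2\,p\,dp$, of smooth functions supported strictly inside $(a,b)$. Here the vanishing of $V$ at $a,b$ is indispensable: because $\chi_I V f'$ is continuous and vanishes at the endpoints, it has finite $\mathcal{H}$-norm (this norm is of homogeneous $H^{1/2}$ type) and can be recovered by truncating in shrinking neighbourhoods of $a$ and $b$ and mollifying, with the error tending to zero. Were $V$ not to vanish at the endpoints, the same cut would create a jump discontinuity, which has infinite $H^{1/2}$-type norm and cannot belong to the local subspace; it is precisely the geometric, endpoint-vanishing form of the modular Hamiltonian that rescues the argument.

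Finally, having exhibited $K_I f$ as a sum of a vector in $H(I)$ and a vector in $H(I)'$, I would invoke the factoriality of $H(I)$, namely $H(I)\cap H(I)'=0$, which makes the sum $H(I)+H(I)'$ direct. Since $P_I$ is by definition the projection onto $H(I)$ along $H(I)'$ on this dense sum, uniqueness of the decomposition forces $P_I K_I f=\chi_I K_I f$, as claimed. The only point requiring the care of \cite{Bostelmann:2020srs} is the bookkeeping of the complex structure, i.e. that it is the real vector $iK_I f$ that is genuinely being projected; the support and approximation analysis is unaffected.
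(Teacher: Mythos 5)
Your proposal is correct, and its skeleton coincides with the paper's proof: you use the identical decomposition $K_I f=\chi_I K_I f+\chi_{I^c}K_I f$, the identical argument for the exterior piece (disjointness of supports kills $\sigma(\chi_{I^c}K_I f,\varphi)$ for $\varphi\in C_c^\infty(a,b)$, then continuity of $\sigma$ in the $\tau$-topology and density give membership in $H(I)'$), and the identical endgame, namely factoriality $H(I)\cap H(I)'=0$ making the decomposition unique, so that $P_I K_I f=\chi_I K_I f$. Where you genuinely depart from the paper is the membership $\chi_I K_I f\in H(I)$. The paper treats this step tersely: it observes that the cut functions are piecewise differentiable, hence have Fourier transforms decaying like $p^{-2}$ and finite $\tau$-norm, and then asserts that the interior piece lies in $H(I)$ ``similarly'' to the exterior one --- which, read literally as a support/orthogonality argument, only places $\chi_I K_I f$ in $H(I^c)'$ and thus implicitly relies on duality for the U(1)-current subspaces to conclude. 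You instead prove membership in $H(I)=\overline{C_c^\infty(a,b)}$ directly, by truncating in shrinking neighbourhoods of the endpoints and mollifying, and you correctly isolate the mechanism that makes both this approximation and the paper's $p^{-2}$ decay claim work: the coefficient $(x-a)(b-x)/(b-a)$ in \eqref{IModHam} vanishes at $a$ and $b$, so cutting with $\chi_I$ creates no jump discontinuity, whereas a jump would have infinite $H^{1/2}$-type norm and the statement would fail. Your route is therefore more self-contained (no duality input) and makes the role of the endpoint vanishing explicit, at the price of needing the quantitative (true, and correctly sketched) estimate that truncation of a compactly supported Lipschitz function vanishing at the cut points converges in the norm $\frac{1}{\pi}\int_0^\infty|\hat g(p)|^2 p\,dp$; your closing remark that the real vector being projected is $iK_I f$, with the complex structure playing no role in the support analysis, matches the paper's implicit bookkeeping.
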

\begin{proof}
     Given $f\in C^{\infty}_{c}(\mathbb{R})$ define the functions $g(x)=\chi_{I}(x)K_{I}f(x)$ and $g_{c}(x)=\chi_{I^{c}}(x)K_{I}f(x)$, which are piecewise-differentiable functions so their Fourier transforms decay at least like $p^{-2}$ for large $p$ and 
\begin{equation*}
    \|g\|_{\tau}^{2}=\frac{1}{\pi}\Re{\int_{0}^{\infty}|\hat{g}(p)|^2 pdp}<\infty,
\end{equation*}     
then $g\in \mathcal{H}$ and also $g_{c}\in \mathcal{H}$. Moreover $\sigma(g_{c},\varphi)=0$ for all $\varphi \in C_{c}^{\infty}(I)$ because supp$(g_{c})\subseteq I_{c}$, then by continuity of $\sigma$ (with respect to the topology induced by $\tau$),  $g_{c}\in H(I)'$. Similarly one can see that $g\in H(I)$ and therefore, as $H(I)$ is factorial, $P_{I}K_{I}f(x)=P_{I}(g(x)+g_{c}(x))=g(x)=\chi_{I}(x)K_{I}f(x)$, which completes the proof. 
\end{proof}

 Finally the relative entropy is
\begin{equation}
    S_{R(I)}(\omega||\omega_f)=S_{I}(f)=2\pi\int_{a}^b \frac{ (x-a)(b-x)}{b-a}f'(x)^2 dx
    \label{vacuumrelativeentropy}
     \end{equation} 
It is translation invariant, in the sense that  $S_{I}(f)=S_{\tau\cdot I}(f\circ \tau^{-1})$. Is is also immediate to see that it is increasing with $L=(b-a)$: $\frac{d}{dL}S_{I(L)}(f)>0$. Interestingly, the relative entropy \eqref{vacuumrelativeentropy} satisfies a  Bekenstein-like bound\footnote{Here $E(f):=\int_a^b f'(x)^2 dx$ is the 1-particle energy associated to the interval of $f \in \mathcal{H}$. } 
    \begin{equation}
       S_{I}(f)\leq \pi \frac{L}{2} \int_a^b f'(x)^2\,dx =:\pi \frac{L}{2} E(f)
        \label{Bekensteinvacuum}
    \end{equation}
    and a QNEC-like bound\footnote{This bound (for the interval centered at $0$) is the one in Proposition 3.7 of \cite{Bostelmann:2020srs}, where the appropriate $T_{f}(s,L)$ is $2\pi \int_{-\min\left\{s,\frac{L}{2}\right\}}^{\min\left\{s,\frac{L}{2}\right\}}\dfrac{(x+\frac{L}{2})(\frac{L}{2}-x)}{L}f'(x)^2\,dx$. It is straightforward to see that such $T_{f}$ satisfies the smoothness hypothesis $C^1$ of the Proposition. }: 
    \begin{align}
      S_I'':=   \frac{d^2}{dL^2}S_{I(L)}(f)&=\frac{\pi}{2}\left(f'(b)^2+f'(a)^2\right)-\frac{4\pi}{L^3}\int_a^b \left(x-\frac{a+b}{2}\right)^2f'(x)^2\,dx\nonumber\\
        &\geq -\frac{4\pi}{L^3}\int_a^b \left(x-\frac{a+b}{2}\right)^2f'(x)^2\,dx.     \label{QNECvacuum}
    \end{align}
The QNEC, when stated in terms of the relative entropy, reads $S''(\lambda)>0$, with the understanding that $\lambda$ continuously labels  nested spacetime regions. However in \eqref{QNECvacuum} we see a  violation of the QNEC, which was anticipated in \cite{Bostelmann:2020srs}. What \eqref{QNECvacuum} says is that in order to have a large violation of the QNEC, a considerable amount of energy must be concentrated near the boundaries of the interval (note also that this negative bound can be saturated). Similarly, the Bekenstein-like bound \eqref{Bekensteinvacuum} implies that in order to make a coherent state largely distinguishable from the vacuum,  a considerable amount of energy needs to be placed in the interval.    

\subsection{The thermal case}

We now turn our attention to thermal states. The underlying symplectic space is again $(C_c^\infty(\mathbb{R}),\sigma)$ with \eqref{symplectic}. The thermal state is defined by 
        \begin{equation}
            \tau_\beta(f,g)=-\pi PV\int_{\mathbb{R}^2}dx\,dy\frac{f(x)g(y)}{\beta^2\sinh^2\left(\frac{\pi}{\beta}(x-y)\right)}
        \end{equation}
This gives the 2-point function \cite{Borchers:1998ye}, 
        \begin{equation}
            w_2^{(\beta)}( f,g)=-\pi\int_{\mathbb{R}^2}\frac{f(x)g(y)}{\beta^2\sinh^2\left(\frac{\pi}{\beta}(x-y)-i\epsilon\right)}dx\,dy.
        \end{equation}
Such 2-point function satisfies being translation invariant and the KMS condition with respect to translations.  The real Hilbert space $\mathcal{K}=L^2(\mathbb{R}_+,\frac{pdp}{1-e^{-\beta p}})$ is obtained after completion of $C_c^\infty(\mathbb{R})$ with $\tau_\beta$ \cite{Bostelmann:2020srs}. Note that this thermal state is the geometric KMS state of \cite{Longo:2016pci,Camassa:2011wk}\footnote{We thank Yoh Tanimoto for pointing this out.}.

Since the state is not pure, we first proceed to ``purify''. In momentum space $D=-i(1-e^ {-\beta p})$, which we use to construct $i^\oplus$ (as given by \eqref{iplus}) and then $\mathcal{H}^\oplus$, the complexification of $\mathcal{K}\oplus\mathcal{K}$. 
            
There are operators acting as the dilation-translation group\footnote{We are not claiming these operators form a representation on $\mathcal{K}$. For instance, there are values of the parameters where the logarithms are not defined and the function is instead defined to be zero. We will not need to take this into account.} on the half-lines. For example, on $H(\mathbb{R}_-)$ \cite{Borchers:1998ye}:
        \begin{align}
            U_\beta(\delta(s))f(x)&=f\left( -\frac{\beta}{2\pi}\log\left(1+e^{-s}(e^{-\frac{2\pi x}{\beta}}-1)\right) \right),\nonumber\\
            U_\beta(\tau(t))f(x)&=f\left(x- \frac{\beta}{2\pi}\log\left(1+\frac{2\pi t}{\beta}e^{\frac{2\pi x}{\beta}}\right) \right),
            \label{BYoperators}
        \end{align}
        which satisfy
        \[ U_\beta(\delta(s))U_\beta(\tau(t))U_\beta(\delta(-s))=U_\beta(\tau(e^st)).\]
These operators leave $w_2^{(\beta)}$ invariant, which implies that
         \begin{equation}
             U_\beta^\oplus (f \oplus 0 + i^\oplus g \oplus 0):=U_\beta f \oplus 0 + i^ \oplus U_\beta g \oplus 0, \qquad f,g\in\mathcal{K}, \
             \label{Uplus}
         \end{equation}
are unitaries of $\mathcal{H}^\oplus$ (we show this later on). Because of this, the modular operator associated to $H(\mathbb{R}_-)$ is given by\footnote{We have a sign difference with respect to \cite{Bostelmann:2020srs} in the parameter inside the dilation $\delta$. This translates into a sign difference in the modular Hamiltonian, but then the relative entropy coincides with equation (5.22) of that reference. } \cite{Bostelmann:2020srs},       
\begin{equation}
    \Delta^{iu}_{H(\mathbb{R}_-)}=U^\oplus_\beta(\delta(2\pi u)).
    \label{BostelmannModularOp}
\end{equation}
Note that this reduces to \eqref{Bostelmannvacuumflow} for $\beta\rightarrow \infty$.

\subsubsection*{The interval $(-\infty,t)$}
From this last expression, and conjugating with the (vacuum) translation $U(t)$ as in \eqref{Imodular}, one can compute the modular Hamiltonian associated to $(-\infty,t)$, 
\begin{equation}
    K_{H((-\infty,t))}f(x)=\beta i^\oplus\left(1-e^{\frac{2\pi}{\beta} (x-t)}\right)f'(x)  ,
    \label{BostelmannModHamiltonian}
\end{equation}
and the relative entropy
\begin{equation}
    S_{R((-\infty,t))}(\omega_f||\omega)=\beta \int_{-\infty}^t \left(1-e^{\frac{2\pi}{\beta}(x-t)}\right)f'(x)^2\,dx.
    \label{BostelmannBetaRelEntropy}
\end{equation}
which is was first computed in \cite{Bostelmann:2020srs}. Note that in Proposition 5.6 of that reference it is shown that the subspace $H((-\infty,t))$ is both standard and factorial, so the relative entropy can be computed with \eqref{relativeentropy}.

\subsubsection*{The interval $(a,b)$}

Now we would like to approach the computation of the relative entropy for the bounded interval $I=(a,b)$. The strategy is analogous to the vacuum case of the previous subsection, but three issues are worth mentioning. First, the subspaces $H(I)$ must be shown to be standard and factorial (which we do at the end). Second the assignment $I \mapsto H(I)$ is not PSL$(2,\mathbb{R})$-covariant anymore, namely $U_\beta(g)H(I)\neq H(g\cdot I)$. However, we only need to find a $\bar{g}$ such that
\begin{equation}
    U_\beta(\bar{g})H(I)=H(\mathbb{R}_-),
\end{equation}
then we conjugate with this unitary the modular operator of the negative real line \eqref{BostelmannModularOp} (in complete analogy with \eqref{Imodular}). Explicitly,
\begin{equation}
            \Delta_I^{iu}=U(\Bar{g})^{-1}\Delta_{(-\infty,0)}^{iu}U(\Bar{g}),\quad u\in \mathbb{R}.
            \label{Imodular2}
\end{equation} 

Third, the attempt to construct $\bar{g}$ as described in the vacuum case, see Figure \ref{figure:gbar}, is not immediate to generalize, since the unitary rotation is no longer available (the vacuum rotation does not leave $w_2^{(\beta)}$ invariant). In \cite{Borchers:1998ye} the authors find the unitary dilations and translations \eqref{BYoperators}. We need to find a unitary operator that works as a rotation, meaning that \textit{it does not fix} $\infty$ (in the real-line picture). We propose that there exists $\alpha(\theta,x)$  such that 
\begin{equation}
       U_\beta(r(\theta))f(x)=f(\alpha(\theta,x)).
       \label{alphadef}
\end{equation}
This means $\alpha(\theta,x)$ should obey the following three conditions:
\begin{enumerate}
    \item Identity: $\alpha(0,x)=x$ \\
    \item 1-parameter group: $\alpha(\theta_1,\alpha(\theta_2,x))=\alpha(\theta_1+\theta_2,x)$ \\
    \item $w_2^{(\beta)}$-compatibility: $\frac{\partial \alpha(\theta,x)}{\partial x}\frac{\partial \alpha(\theta,y)}{\partial y}\sinh\left(\frac{\pi}{\beta}(\alpha(\theta,x)-\alpha(\theta,y))\right)^{-2}=\sinh\left(\frac{\pi}{\beta}(x-y)\right)^{-2}$ 
\end{enumerate}
Of course, $\alpha(\theta,x)$ depends also on $\beta$. The third condition, together with \eqref{Uplus}, assures that the operator $U^\oplus_\beta$ induced by $U_\beta(r(\theta))$ is unitary. Let us see why, 
\begin{align}
    \langle U^\oplus_\beta (f_1\oplus 0 + i^\oplus g_1 \oplus 0)\,&,\,U^\oplus_\beta (f_2\oplus 0 + i^\oplus g_2 \oplus 0) \rangle^\oplus\nonumber\\
    &=\langle U_\beta(r(\theta)) f_1\oplus 0 + i^\oplus U_\beta(r(\theta)) g_1 \oplus 0 \,,\,U_\beta(r(\theta)) f_2\oplus 0 + i^\oplus U_\beta(r(\theta)) g_2 \oplus 0 \rangle^\oplus \nonumber \\
    &= w_2^{(\beta)}(f_1\circ \alpha(\theta,\cdot),f_2\circ \alpha(\theta,\cdot))+w_2^{(\beta)}(g_1\circ \alpha(\theta,\cdot),g_2\circ \alpha(\theta,\cdot))\nonumber\\
    &+ \tau_\beta(f_1\circ \alpha(\theta,\cdot),-D g_2\circ \alpha(\theta,\cdot))+i\tau_\beta(f_1\circ \alpha(\theta,\cdot), g_2\circ \alpha(\theta,\cdot))\nonumber\\
    &+ \tau_\beta(-D g_1\circ \alpha(\theta,\cdot), f_2\circ \alpha(\theta,\cdot))-i\tau_\beta(g_1\circ \alpha(\theta,\cdot), f_2\circ \alpha(\theta,\cdot))\nonumber\\
    &= w_2^{(\beta)}(f_1\circ \alpha(\theta,\cdot),f_2\circ \alpha(\theta,\cdot))+w_2^{(\beta)}(g_1\circ \alpha(\theta,\cdot),g_2\circ \alpha(\theta,\cdot))\nonumber\\
    &- \sigma(f_1\circ \alpha(\theta,\cdot), g_2\circ \alpha(\theta,\cdot))+i\tau_\beta(f_1\circ \alpha(\theta,\cdot), g_2\circ \alpha(\theta,\cdot))\nonumber\\
    &+ \sigma(g_1\circ \alpha(\theta,\cdot), f_2\circ \alpha(\theta,\cdot))-i\tau_\beta(g_1\circ \alpha(\theta,\cdot), f_2\circ \alpha(\theta,\cdot))\nonumber\\
    &= w_2^{(\beta)}(f_1\circ \alpha(\theta,\cdot),f_2\circ \alpha(\theta,\cdot))+w_2^{(\beta)}(g_1\circ \alpha(\theta,\cdot),g_2\circ \alpha(\theta,\cdot))\nonumber\\
    &+i w_2^{(\beta)}(f_1\circ \alpha(\theta,\cdot), g_2\circ \alpha(\theta,\cdot))-iw_2^{(\beta)}(g_1\circ \alpha(\theta,\cdot), f_2\circ \alpha(\theta,\cdot))\nonumber\\
    &=w_2^{(\beta)}(f_1,f_2)+w_2^{(\beta)}(g_1,g_2)+i w_2^{(\beta)}(f_1, g_2)-iw_2^{(\beta)}(g_1, f_2)\nonumber\\
    &=\langle f_1\oplus 0 + i^\oplus g_1 \oplus 0\,,\,f_2\oplus 0 + i^\oplus g_2 \oplus 0\rangle^\oplus,
\end{align}
where in the fifth equality we used the $w_2^{(\beta)}$-compatibility condition.

In order to find $\alpha(\theta,x)$, there is a hint coming
from the PSL$(2,\mathbb{R})$ product rules: 
\begin{align*}
      {\delta(s)r(\theta)\delta(-s)=r\left(2\arctan\left( e^{-s}\lambda\right)\right)\delta\left(\log\left[\frac{1+e^{-2s}\lambda^2}{1+\lambda^2}\right]\right) \tau\left(\frac{2\sinh(s)\lambda}{1+e^{-2s}\lambda^2}\right)}
\end{align*}
where $\lambda=\tan\frac{\theta}{2}$.
This translates, by means of \eqref{BYoperators} and \eqref{alphadef},  into a functional equation:
\begin{equation}
    \left(e^s+e^{-s}\lambda^2\right)\left(e^{\frac{2\pi}{\beta}\alpha\left(\theta,\phi(s,x)\right) }-1  \right)=\left(1+\lambda^2\right)\left(e^{\frac{2\pi}{\beta} \alpha(2\arctan\left( e^{-s}\lambda\right),x)}-1\right)-\frac{4\pi}{\beta}\sinh(s)\lambda 
\end{equation}
with
\begin{equation}
    \phi(s,x)=\frac{\beta}{2\pi}\log\left(1+e^{-s}(e^{\frac{2\pi x}{\beta}}-1)\right) 
\end{equation}
It is convenient to work with $A(\lambda,x)$ defined by:
      \begin{equation}
          \alpha(\theta,x)=\frac{\beta}{2\pi} \log\left[1+A(\lambda,x)\right]
          \label{Adef}
      \end{equation}
Differentiating w.r.t. $s$ and setting $s=0$ we get a PDE
\begin{equation}
    \frac{\lambda^2-1}{\lambda^2+1}A(\lambda,x)+\frac{\beta}{2\pi}\left(1-e^{-\frac{2\pi}{\beta}x}\right)\partial_xA(\lambda,x)=\lambda \partial_\lambda A(\lambda,x)+\frac{4\pi}{\beta}\frac{\lambda}{\lambda^2+1},
\end{equation}
which has infinite solutions of the form
\begin{equation}
    A(\lambda,x)=\frac{2\pi}{\beta} \left[-\lambda +\frac{1+\lambda^2}{\lambda} B\left(\lambda \left(e^{\frac{2\pi}{\beta }x}-1\right)\right)\right],
    \label{Asol}
\end{equation}
for any function $B$. From $\alpha(0,x)=x$ we get 
      \begin{equation}
          B(z)\sim\frac{\beta}{2\pi}z,\qquad  z\rightarrow 0.
      \end{equation}
From this and condition 2 above (group property) evaluated at $x=0$ we get 
\begin{equation}
     B(z)=\frac{z}{z+\frac{2\pi}{\beta}}
\end{equation}
Now plugging this form of $B$ into \eqref{Asol} and taking into account \eqref{alphadef} and \eqref{Adef},  
$U_\beta\left(r(\theta)\right)f=f(\alpha(\theta,\cdot))$ can be shown to be compatible with $w_2^{(\beta)}$ (condition 3 above) where
\begin{equation}
    \alpha(\theta,x)=\frac{\beta}{2\pi} \log\left[1+A(\lambda,x)\right],\qquad A(\lambda,x)=\frac{2\pi}{\beta} \frac{e^{\frac{2\pi}{\beta}x }-1-\frac{2\pi}{\beta}\lambda}{\lambda (e^{\frac{2\pi}{\beta}x}-1)+\frac{2\pi}{\beta}}.
    \label{alphasol}
\end{equation}
Let us see this, first of all we rewrite  the $w_2^{(\beta)}$-compatibility condition,
\begin{equation*}
    \sinh^{2}\left(\frac{\pi}{\beta}(\alpha(\theta,x)-\alpha(\theta,y))\right)=\frac{\partial \alpha(\theta,x)}{\partial x}\frac{\partial \alpha(\theta,y)}{\partial y}  \sinh^{2}\left(\frac{\pi}{\beta}(x-y)\right).
\end{equation*}
A straightforward computation (using $e^{\frac{\pi}{\beta}\alpha(\theta,x)}=(1+A(\lambda,x))^{\frac{1}{2}}$) of the square root of the left hand side gives
\begin{equation*}
    \sinh\left(\frac{\pi}{\beta}(\alpha(\theta,x)-\alpha(\theta,y))\right)=
    \frac{1}{2}\left[ \frac{(1+A(\lambda,x))^{\frac{1}{2}}}{(1+A(\lambda,y))^{\frac{1}{2}}}-\frac{(1+A(\lambda,y))^{\frac{1}{2}}}{(1+A(\lambda,x))^{\frac{1}{2}}} \right].
\end{equation*} 
Squaring this expression and with \eqref{alphasol},
\begin{align}
    \sinh^{2}&\left(\frac{\pi}{\beta}(\alpha(\theta,x)-\alpha(\theta,y))\right)=
    \frac{1}{4}\left[ \frac{1+A(\lambda,x)}{1+A(\lambda,y)}-\frac{1+A(\lambda,y)}{1+A(\lambda,x)} -2\right] \nonumber\\
    &=Y(\lambda,x)Y(\lambda,y) \sinh^2\left(\frac{\pi}{\beta}(x-y)\right)
\end{align}
where 
\begin{align*}
    Y(\lambda,x)&=\frac{4 \pi ^2 \beta \left(\lambda ^2+1\right) e^{\frac{2 \pi  x}{\beta }}}{\left(\beta  \lambda  \left(e^{\frac{2 \pi  x}{\beta }}-1\right)+2 \pi \right) \left(\beta  (\beta  \lambda +2 \pi ) e^{\frac{2 \pi  x}{\beta }}-\left(\beta ^2+4 \pi ^2\right) \lambda \right)}.
\end{align*}
But it turns out that a straightforward computation gives
\[\frac{\partial \alpha}{\partial x}(\theta,x)=Y(\lambda,x)\]
which means that the $w_2^{(\beta)}$-compatibility condition holds.

Having found a unitary rotation, we can implement the first transformation of Figure \ref{figure:gbar} with  $U_\beta(r(\Tilde{\theta}))$ where
        \begin{equation}
           \Tilde{\theta}=2\arctan\left(-\frac{2\pi}{\beta}\frac{e^{\frac{2\pi a}{\beta}} }{e^{\frac{2\pi a}{\beta}}-(\frac{2\pi}{\beta})^2-1}\right),
            \label{tildetheta}
        \end{equation}
and taking into account that the corresponding unitary $U_\beta^\oplus(r(\theta))$ on $\mathcal{H}^\oplus$ is defined by \eqref{Uplus}. This rotation sends $a$ to $-\infty$ and $b$ to $b'=\alpha(\Tilde{\theta},b)$, so it maps $H(I)$ to $H((-\infty,b'))$. Then, by a unitary vacuum translation $U(-b')$,  $H((-\infty,b'))$ is mapped to $H(\mathbb{R}_-)$ as desired. The unitary $U(\bar{g})$ is the composition of these two unitary transformations. 

From \eqref{Imodular2}, the modular evolution on $H(I)\oplus 0$ is
\begin{equation}
    \Delta^{iu}_{(a,b)}f(x)\oplus 0=f\left[\frac{\beta}{2\pi}\log\left( 
    \frac{\sinh(\pi u)e^{\frac{\pi}{\beta}(a+b-x)}-\sinh(\frac{\pi}{\beta}(b-a)+\pi u)e^{\frac{\pi x}{\beta}}}{-\sinh(\pi u)e^{-\frac{\pi}{\beta}(a+b-x)}+\sinh(-\frac{\pi}{\beta}(b-a)+\pi u)e^{-\frac{\pi x}{\beta}}}   \right)
    \right]\oplus 0
    \label{IBetaModularOp}
\end{equation}
By differentiating, the modular Hamiltonian is
     \begin{equation}
     K_{(a,
     b)}f(x)\oplus 0=2 \beta i^\oplus \dfrac{ \sinh{(\frac{\pi}{\beta}(x-a))}\sinh{(\frac{\pi}{\beta}(b-x))}}{\sinh{(\frac{\pi}{\beta}(b-a))}}f'(x)\oplus 0
     \label{IBetaModHamiltonian}
     \end{equation}
It coincides with \eqref{BostelmannModHamiltonian} in the limit $a\rightarrow -\infty$ and with \eqref{IModHam} for $\beta\rightarrow\infty$. 
Again like in the vacuum case now one can still see that $P_{I}K_{I}f(x)=\chi_{I}(x)K_{I}f(x)$. The proof is similar to the one we showed above, the only difference is that this time we have to see that $g$ has finite $\tau_{\beta}$ norm, but the same argument works. Given $f\in C^{\infty}_{c}(\mathbb{R})$ define once again the functions $g(x)=\chi_{I}(x)K_{I}f(x)$ and $g_{c}(x)=\chi_{I^{c}}(x)K_{I}f(x)$, these are piecewise-differentiable functions so its Fourier transform is bounded and decays at least like $p^{-2}$ for large $p$, then
\begin{equation*}
    \|g\|_{\tau_{\beta}}^{2}=\frac{1}{\pi}\Re{\int_{0}^{\infty}\dfrac{|\hat{g}(p)|^2 p}{1-e^{-\beta p}}dp}<\infty.
\end{equation*}     

Finally, the relative entropy is given by
\begin{equation}
    S_{R(I)}(\omega_f||\omega)=2 \beta \int_a^b \dfrac{ \sinh{(\frac{\pi}{\beta}(x-a))}\sinh{(\frac{\pi}{\beta}(b-x))}}{\sinh{(\frac{\pi}{\beta}(b-a))}}f'(x)^2\,dx
    \label{Ibetarelativeentropy}
\end{equation}
which is our main result. This relative entropy coincides, modulo some factor,  with the modular Hamiltonian in the cut-off theory (equation (4.2) in \cite{hartman2015speed}). This can be formally understood by first noticing that the relative entropy can be related to a difference of mean values of the modular Hamiltonian $K$ and a difference of entanglement entropies, 
\begin{equation}
    S(\omega_2||\omega_1)=(\langle K_1\rangle_2-\langle K_1\rangle_1 )- (S_2-S_1).
\end{equation}
In our case the last parenthesis is zero since one state is a unitary applied to the other state (in the vector representation). This explains the connection of \eqref{Ibetarelativeentropy} to the modular Hamiltonian  of \cite{hartman2015speed}. Since the arguments of \cite{hartman2015speed} are of general validity within CFTs, and taking into account the above discussion, it is reasonable to expect that in general \eqref{Ibetarelativeentropy} will hold with $f'(x)^2$ replaced by the classical energy density $T_{00}(x)$ of the theory. We will confirm this expectation in the next section for the massless scalar QFT in 1+1 dimensions.   

Identically to the vacuum case \eqref{vacuumrelativeentropy}, the relative entropy \eqref{Ibetarelativeentropy} is translation invariant and with positive derivative wih respect to the length $L$ of the interval. There is also a Bekenstein-like bound 
\begin{equation}
    S_{(a,
     b)}(f)\leq \pi \frac{L}{2}\,\left( \frac{\tanh\left(\frac{\pi}{\beta}\frac{L}{2}\right) } {\frac{\pi}{\beta}\frac{L}{2}}\right) \int_a^b f'(x)^2\,dx \leq \pi \frac{L}{2}\, \int_a^b f'(x)^2\,dx,
     \label{IbetaQNECbound}
\end{equation}
and a QNEC-like bound \footnote{Again, this  the same as the result of Proposition 3.7 of \cite{Bostelmann:2020srs}, this time  $T_{f}(s,L)$ is $2\beta \int_{-\min\left\{s,\frac{L}{2}\right\}}^{\min\left\{s,\frac{L}{2}\right\}}\dfrac{\sinh{\frac{\pi}{\beta}(x+\frac{L}{2})}\sinh{\frac{\pi}{\beta}(\frac{L}{2}-x)}}{\sinh{\frac{\pi L}{\beta}}}f'(x)^2\,dx$. It is straightforward to see that such $T_{f}$ satisfies the smoothness hypothesis of the Proposition. }
\begin{equation}
    \frac{d^2}{dL^2}S_{I(L)}(f)\geq -\frac{\pi^2}{\beta \sinh^3(L\frac{\pi}{\beta})} \int_a^b \resizebox{.5\hsize}{!}{$\left[\left(\cosh( \frac{\pi}{\beta}L)-1\right)^2+2\sinh^2\left(\frac{\pi}{\beta}(x-c)\right)\left(1+\cosh^2(\frac{\pi}{\beta}L)\right)\right]$}f'(x)^2\,dx,
\end{equation}
where $c=(a+b)/2$. We shall discuss this expression later on.

Before concluding this section we have to show that $H(I)$ is standard and factorial so the machinery we have been using, and in particular \eqref{relativeentropy}, is valid. We do this in the following Proposition.

\begin{proposition}
$H(I)$ is standard and factorial
\begin{proof}
    The condition of separability $H(I)\cap i^\oplus H(I)=0$ follows exactly as in Proposition 5.6 of \cite{Bostelmann:2020srs} (or with the logic for what follows). The cyclicity, $(H(I)+i^\oplus H(I))^\perp=0$, can be shown to hold using the unitary rotation \eqref{alphadef}. Given any $H((a,b))$ there is an associated subspace $H_{b'}:=H((-\infty,b'))=U^\oplus_\beta(r(\Tilde{\theta}))H(I)$ obtained by a rotation in $\tilde{\theta}$ given by \eqref{tildetheta} and explained after that equation. The subspace $H_{b'}$ is, by Proposition 5.6 of \cite{Bostelmann:2020srs}, standard and factorial. It is immediate to show that $0=(H_{b'}+i^\oplus H_{b'})^\perp=U^\oplus_\beta(r(\Tilde{\theta}))(H(I)+i^\oplus H(I))^\perp$ which implies that $H(I)$ is cyclic. Similarly, we conclude that $H(I)$ is factorial since $0=H_{b'}\cap H_{b'}'=U^\oplus_\beta(r(\Tilde{\theta}))(H(I)\cap H(I)')$.
\end{proof}
    
\end{proposition}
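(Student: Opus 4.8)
The plan is to reduce the standardness and factoriality of $H(I)$ for a bounded interval to the half-line case, which is already settled, and to transport the two properties along the unitary rotation constructed earlier in this section. Recall that $H(I)$ is standard precisely when it is cyclic and separating, and factorial when $H(I)\cap H(I)'=0$, where $H(I)'$ is the symplectic complement (with respect to $\sigma^\oplus=\mathrm{Im}\,\langle\cdot,\cdot\rangle^\oplus$) inside $\mathcal{H}^\oplus$.

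First I would invoke Proposition 5.6 of \cite{Bostelmann:2020srs}, which guarantees that every half-line subspace $H((-\infty,b'))$ is standard and factorial. The central object is the unitary $U^\oplus_\beta(r(\tilde\theta))$ with $\tilde\theta$ given by \eqref{tildetheta}, whose action on $\mathcal{H}^\oplus$ is induced through \eqref{Uplus} by the rotation \eqref{alphadef}; by construction it is complex-linear, i.e.\ it commutes with $i^\oplus$, and hence it preserves the full inner product $\langle\cdot,\cdot\rangle^\oplus$ together with its real and imaginary parts $\tau^\oplus$ and $\sigma^\oplus$. The geometric fact established just after \eqref{tildetheta}, namely that $\alpha(\tilde\theta,\cdot)$ sends $a\mapsto-\infty$ and $b\mapsto b'$, yields $U^\oplus_\beta(r(\tilde\theta))\,H(I)=H((-\infty,b'))=:H_{b'}$.

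The transfer of the properties is then immediate. Cyclicity follows from the identity $0=(H_{b'}+i^\oplus H_{b'})^\perp=U^\oplus_\beta(r(\tilde\theta))\,(H(I)+i^\oplus H(I))^\perp$, where the first equality is Proposition 5.6 and the second uses that a complex-linear unitary commutes with $i^\oplus$ and maps orthogonal complements to orthogonal complements; injectivity of the unitary then forces $(H(I)+i^\oplus H(I))^\perp=0$. Factoriality follows identically from $0=H_{b'}\cap H_{b'}'=U^\oplus_\beta(r(\tilde\theta))\,(H(I)\cap H(I)')$, since preservation of $\sigma^\oplus$ implies $U^\oplus_\beta(r(\tilde\theta))\,H(I)'=H_{b'}'$ and the unitary is a bijection. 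Separability can be read off from factoriality, or argued directly along the lines of Proposition 5.6 of \cite{Bostelmann:2020srs}.

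The hard part will not be this transport argument, which is purely formal once the unitary is available, but rather the verification upstream that the rotation genuinely maps $H(I)$ \emph{onto} $H_{b'}$ and not merely into it. This amounts to checking that $x\mapsto\alpha(\tilde\theta,x)$ is an orientation-preserving diffeomorphism of $(a,b)$ onto $(-\infty,b')$ and that pushing forward $C^\infty_c(a,b)$ produces a set whose closure in the $\tau_\beta$-topology is exactly $H_{b'}$. With the explicit form \eqref{alphasol} and the $w_2^{(\beta)}$-compatibility already in hand, this reduces to a monotonicity-and-endpoint computation, but it is the single step where the rotation constructed in this section does the essential work.
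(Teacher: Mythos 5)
Your proposal is correct and follows essentially the same route as the paper: reduce to the half-line case via the unitary rotation $U^\oplus_\beta(r(\tilde\theta))$, invoke Proposition 5.6 of \cite{Bostelmann:2020srs} for $H_{b'}$, and transport cyclicity, separability and factoriality back through the identities $(H_{b'}+i^\oplus H_{b'})^\perp=U^\oplus_\beta(r(\tilde\theta))(H(I)+i^\oplus H(I))^\perp$ and $H_{b'}\cap H_{b'}'=U^\oplus_\beta(r(\tilde\theta))(H(I)\cap H(I)')$. Your additional remark that the real content lies in verifying $U^\oplus_\beta(r(\tilde\theta))H(I)=H_{b'}$ exactly (not merely an inclusion) is a fair observation about a step the paper treats as established by the geometric discussion following \eqref{tildetheta}, but it does not change the argument.
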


\section{The free massless boson in $1+1$ dimensions at finite temperature}

In this section we take advantage of the quantities we have computed for the chiral boson and combine the two chiralities in order to obtain the modular flow, modular Hamiltonian and relative entropy on the interval $(a,b)$ for the massless free boson in two dimensions $\Phi$. 

First of all, let us define $x^\pm=t\pm x$, and $j^\pm(x^\pm)=\partial_\pm\phi^\pm(x^\pm)$ are  the (non-smeared) chiral currents of the previous section (below we give further details). In this section we will use $\pm$ symbols to denote copies of the objects of the chiral case (with the exception of the symplectic structure $\sigma$ and the bilinear form $\tau$). So for example $\mathcal{H}$ now refers to a Hilbert space of the two-dimensional model, and $\mathcal{H}_\pm$ are Hilbert spaces of the chiral case. 

The symplectic space of the massless boson in two dimensions is\footnote{Here we are defining $f\in \dot{C}_c^\infty(\mathbb{R})$ if $f\in {C}_c^\infty(\mathbb{R})$ and $\hat{f}(0)=0$. This is necessary to avoid the well-known IR problem of the massless 2-dimensional field \cite{Streater:1989vi}.} \cite{Longo:2021rag} 
\begin{equation}
    \mathcal{K}=C_c^\infty(\mathbb{R})\oplus \dot{C}_c^\infty(\mathbb{R})
\end{equation} 
with symplectic structure 
\begin{equation}
    \sigma_{2D}((f_1,g_1),(f_2,g_2))=\frac{1}{2}\int_\mathbb{R} dx (g_1(x) f_2(x)-f_1(x) g_2(x))
    \label{symplectic2dim}
\end{equation}
Here the pair $(f,g)\in\mathcal{K}$ should be thought as the initial conditions $\Phi(0,x)=f(x)$, $\dot\Phi(0,x)=g(x)$ of a solution $\Phi(t,x)$ of the Klein-Gordon equation. In general, 
\begin{equation}
    \Phi(t,x)=\phi_+(x^+)+\phi_-(x^-),\qquad \phi^\pm \in C_c^\infty(\mathbb{R}).
\end{equation}
Then, the symplectic structure \eqref{symplectic2dim} can be written as
\begin{equation}
    \sigma_{2D}((f_1,g_1),(f_2,g_2))=-\int_\mathbb{R}dx \left(\phi_1^+(x) \phi_2^+{}'(x)+\phi_1^-(x) \phi_2^-{}'(x) \right)
    \label{symplecticstructures}
\end{equation}
 The lack of mixing between the chiralities implies that there is a symplectic isomorphism\footnote{It is most easily written in Fourier space: $\hat\phi_\pm(\pm p)=\frac{1}{2}(\hat{f}( p)\pm \frac{i}{p} \hat{g}( p))$.} $\chi$ that maps $(\mathcal{K},\sigma_{2D})$ to $(\mathcal{K}_- \oplus \mathcal{K}_+,-(\sigma\oplus\sigma)) $, with inverse given by
\begin{equation}
\chi^{-1}\begin{pmatrix}\phi_+ \\ \phi_-\end{pmatrix}=\begin{pmatrix}\phi_+(x)+\phi_-(-x) \\ {\phi'_+}(x)+{\phi'_-}(-x)\end{pmatrix}=\begin{pmatrix} f(x) \\ g(x) \end{pmatrix}.
    \label{symplectomorphism}
\end{equation}
In turn, this implies that the CCR$(\mathcal{K},\sigma_{2D})$ algebra is equivalent to the tensor product \[\text{CCR}(\mathcal{K}_-,-\sigma)\otimes
\text{CCR}(\mathcal{K}_+,-\sigma),\] 
with $\sigma$ as in \eqref{symplectic}. More precisely, we identify these CCR-algebras by
\begin{equation}
    W(\phi_-(x))\otimes W(\phi_+(x))\mapsto W((\phi_+(x)+\phi_-(-x),\,\,{\phi_+}'(x)+{\phi_-}'(-x)))
\end{equation}
This is in fact a $*-$isomorphism of the algebras. The change in sign in the symplectic structure $\sigma$ w.r.t to the previous section requires a change in sign in the complex structure\footnote{In order to see this, note that $\tau$ is independent of this change in sign, since it must be positive. Therefore from the defining equation of the complex structure $\tau(\cdot,D\cdot)=\sigma(\cdot,\cdot)$ it is seen that a change in sign in $\sigma$ translates into a change in sign in $D$ and therefore in the complex structure.}, and these two signs end up compensating each other in the relative entropy\footnote{The 1-particle modular Hamiltonian $K$ is not affected by this sign change, since $S$ is not affected as seen by its definition and neither is $S^*$, therefore $\Delta=S^*S$ is not affected.} \eqref{relativeentropy}.

Given a positive symmetric bilinear form $\tau_{2D}$ on $\mathcal{K}$ and its corresponding quasi-free state on CCR$(\mathcal{K},\sigma_{2D})$, by the isomorphisms mentioned above we get a quasi-free product state on  CCR$(\mathcal{K}_+,\sigma)\otimes \text{CCR}(\mathcal{K}_-,\sigma)$ with the same $\tau$ for each chiral copy. Therefore the vacuum one-particle Hilbert space is
\begin{equation}
\mathcal{H}\simeq \mathcal{H}_-\oplus\mathcal{H}_+ 
\end{equation}
where $\mathcal{H}_\pm$ are copies of the chiral boson Hilbert space $L^2(\mathbb{R}_+,pdp)$. The isomorphism \eqref{symplectomorphism} is anti-linear, since in momentum space (or coordinate space, using properties of the Hilbert transform $\mathfrak{H}$) it is direct to show that 
\begin{equation}
    \chi^{-1}i_1=-i_2\chi^{-1},
\end{equation}
where $i_1$ is the complex structure of the chiral boson and $i_2$ is the complex structure in \cite{Longo:2020amm,Longo:2021rag} 
\begin{equation}
    i_2:=\left(\begin{matrix}
    0& |p|^{-1}\\
    -|p| &0
    \end{matrix}\right)
\end{equation}
Therefore, 
\begin{align}
    \tau_{2D}(\Phi,\Psi)&= \sigma_{2D}(\Phi,i_2\Psi)\nonumber\\
    &=-\sigma(\phi_+,(\chi i_2\Psi)_+)-\sigma(\phi_-,(\chi i_2\Psi)_-)\nonumber\\
    &=-\tau(\phi_+,-i_1(\chi i_2\Psi)_+)-\tau(\phi_-,-i_1(\chi i_2\Psi)_-)\nonumber\\
    &=\tau(\phi_+,i_1(\chi i_2\Psi)_+)+\tau(\phi_-,i_1(\chi i_2\Psi)_-)\\
    &=\tau(\phi_+,\psi_+)+\tau(\phi_-,\psi_-)
    \label{taus}
\end{align}

Analogously, for the thermal state we have
\begin{equation}
    \mathcal{H}^\oplus\simeq\mathcal{H}_-^\oplus \oplus\mathcal{H}_+^\oplus
\end{equation}
with $\mathcal{H}_\pm^\oplus$ two copies of the purified Hilbert space that we constructed in the previous section (which was called $\mathcal{H}^\oplus$, we hope there is no confusion). The Fock spaces are related as
\begin{equation}
    \Gamma(\mathcal{H}^\oplus)\simeq \Gamma(\mathcal{H}_-^\oplus)\otimes \Gamma(\mathcal{H}_+^\oplus)
\end{equation}
From now on we identify all these spaces with the appropriate isomorphisms.

\subsection{Modular flow and modular Hamiltonian}

Let us consider a causal diamond with base $(a,b)$ on the time-zero surface. Its corresponding standard subspace is $H(\Diamond)$ of pairs $(f,g)\in \mathcal{K}$ supported on the interval $(a,b)$ or equivalently Klein-Gordon fields $\Phi$ with initial conditions given by $(f,g)$.  Note that such  diamond is described  in null coordinates as $(x^-,x^+)\in ((-b,-a), (a,b))$. Right wedges are obtained in limit $b\rightarrow \infty$ and similarly $a\rightarrow -\infty$ for left wedges.

At the one-particle level, we have
\begin{equation}
    K_{H(\Diamond)}\simeq K_{H((-b,-a))} \oplus K_{H((a,b))}.
    \label{Ks}
\end{equation} 
This follows from the fact that for $\Phi, \Psi \in H (\Diamond)$ 
\begin{align}
    S_{H(\Diamond)} (\Phi + i \Psi) &=\Phi - i \Psi \nonumber \\
    &= \phi_+ -i  \psi^+ + \phi_- -i \psi^-  \nonumber\\
    &=S_{H((-b,-a))}(\phi_- +i \psi^-  ) + S_{H((a,b))} (\phi_+ +i  \psi^+)
\end{align}
implying that $S_{H(\Diamond)}\simeq S_{H((-b,-a))} \oplus S_{H((a,b))}$ and then $\Delta_{H(\Diamond)}\simeq \Delta_{H((-b,-a))} \oplus \Delta_{H((a,b))}$. The modular evolution in the diamond, 
\begin{equation}
\Delta_{H(\Diamond)}^{iu}\simeq \Delta_{H((-b,-a))}^{iu} \oplus \Delta_{H((a,b))}^{iu}
\end{equation}
which explicilty reads,
\begin{equation}
   \left[\Delta_{H(\Diamond)}^{iu} \left( \begin{matrix}
        f\\
        g
\end{matrix}\right)\oplus \left( \begin{matrix}
        0\\
        0
\end{matrix}\right)\right](x)=\left(\begin{matrix}
   [\Delta_{H((-b,-a))}^{iu} \phi_-](-x)+[\Delta_{H((a,b))}^{iu} \phi_+](x)\\  [\Delta_{H((-b,-a))}^{iu} \phi_-]'(-x)+[\Delta_{H((a,b))}^{iu} \phi_+]'(x)
    \end{matrix}\right)\oplus \left( \begin{matrix}
        0\\
        0
\end{matrix}\right)
\end{equation}
where $\phi_\pm$ should be thought as given in terms of $(f,g)$ using the isomorphism $\chi$ and the evolution of each chirality is given in \eqref{IBetaModularOp}. A more intuitive presentation of the modular flow is to show the geometric transformation of the coordinates $(t,x)$ inside the diamond, as in 
Figure  \ref{modularevolution}.

\begin{figure}[h]
\centering
\includegraphics[width=11cm]{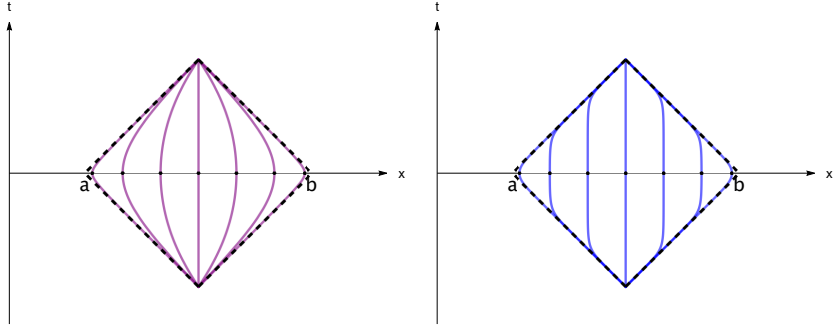}
\caption{Modular flow for low temperature (left) and high temperature (right)}
\label{modularevolution}
\end{figure}

\subsection{Relative entropies and bounds}

The relative entropy in two dimensions is the sum of the relative entropies of the chiral copies, which follows from \eqref{symplecticstructures} and \eqref{Ks}. Before arriving to an explicit expression of the relative entropies for different cases, we first  find the modular Hamiltonians.   

\subsubsection*{The wedge}

On a right wedge $W_R(a)$ with base $(a,\infty)$, given \eqref{symplectomorphism} and \eqref{Ks}, we have the corresponding vacuum modular Hamiltonian acting on the initial conditions
\begin{align}
    K_{H((W_R(a))}\left(\begin{matrix}
f\\
g
    \end{matrix}\right)
    &=\left( \begin{matrix}
    (K_{H((-\infty,-a))}\phi_-)(-x) +(K_{H((a,\infty))}\phi_+)(x)\\
    (K_{H((-\infty,-a))}\phi_-)'(-x) +(K_{H((a,\infty))}\phi_+)'(x)
    \end{matrix}
 \right)\nonumber\\
 &=-2\pi i\left( \begin{matrix}
    (-a+x)\phi_-'(-x) +(x-a)\phi_+'(x)\\
    ((-a-x)\phi'_-)'(-x) +((x-a)\phi'_+)'(x)
    \end{matrix}
 \right)\nonumber\\
 &=-2\pi i\left( \begin{matrix}
    (-a+x)\phi_-'(-x) +(x-a)\phi_+'(x)\\
    ((a-x)\phi'_-(-x) +(x-a)\phi'_+(x))'
    \end{matrix}
 \right)\nonumber\\
 &=-2\pi i\left( \begin{matrix}
    (x-a) g(x)\\
    ((x-a) f'(x))'
    \end{matrix}
 \right)
\end{align}
Where in the second line we made use of the antilinearity between the chiral spaces $\mathcal{H}_\pm$ and $\mathcal{H}$. Then, 
\begin{equation}
    K_{H(W_R(a))}=-2\pi i\left( \begin{matrix}
     0& x-a\\
    \frac{d}{dx}(x-a)\frac{d}{dx} & 0
    \end{matrix}
 \right)
\end{equation}

Plugging this modular Hamiltonian in \eqref{relativeentropy},
\begin{equation}
    S_{H(W_R(a))}((f,g))=2\pi\int_a^\infty  (x-a)T_{00}(x)dx,
\end{equation}
with
\begin{equation}
    T_{00}(x)=\frac{1}{2}(f'(x)^2+g(x)^2)
\end{equation}
the classical energy density at $t=0$ of the KG field $\Phi$. This is the same result as that of \cite{Longo:2019mhx}, with a  translation by $a$.

Similarly, for the thermal state we have
\begin{align}
    K_{H(W_R(a))}^{(\beta)}\left(\begin{matrix}
f\\
g
    \end{matrix}\right)\oplus \left(\begin{matrix}
0\\
0
    \end{matrix}\right) 
     &=-\beta i^\oplus\left( \begin{matrix}
    \left(1-e^{-\frac{2\pi}{\beta}(x-a)}\right) g(x)\\
    \left[\left(1-e^{-\frac{2\pi}{\beta}(x-a)}\right)  f'(x)\right]'
    \end{matrix}
 \right)\oplus \left(\begin{matrix}
0\\
0
    \end{matrix}\right) 
    \label{wedgebetamodularhamiltonian}
\end{align}
Where we have used \eqref{BostelmannModHamiltonian} and  \eqref{IBetaModHamiltonian} in the limit $b\rightarrow\infty$. The relative entropy on the wedge at finite temperature is then,
\begin{equation}
        S_{H(W_R(a))}^{(\beta)}((f,g))=\beta\int_a^\infty  \left(1-e^{-\frac{2\pi}{\beta}(x-a)}\right) T_{00}(x)dx
        \label{wedgebetarelativeentropy}
\end{equation}
Note that this expression is valid even for initial conditions supported outside $x>a$, since the cutting projector in \eqref{relativeentropy} restricts the integral to the wedge \cite{Longo:2020amm}. The only restriction on the initial conditions $(f,g)$ is that they belong to the domain of the modular Hamiltonian \eqref{wedgebetamodularhamiltonian}. 

\subsubsection*{The interval}

Repeating the previous computations for the time-zero interval $(a,b)$, we obtain for the vacuum,
\begin{equation}
    K_{H((a,b))}\left(\begin{matrix}
f\\
g
    \end{matrix}\right) 
     =-2\pi i\left( \begin{matrix}
    \frac{(b-x)(x-a)}{b-a} g(x)\\
    \left[\frac{(b-x)(x-a)}{b-a}  f'(x)\right]'
    \end{matrix}
 \right) .
    \label{Imodularhamiltonian2D}
\end{equation}
The vacuum relative entropy of a coherent state is
\begin{equation}
        S_{H((a,b))}((f,g))=2\pi\int_a^b  \frac{(b-x)(x-a)}{b-a} T_{00}(x)dx
        \label{Irelativeentropy2D}
\end{equation}

On the other hand, at finite temperature we have,
\begin{align}
    K_{H((a,b))}^{\beta}\left(\begin{matrix}
f\\
g
    \end{matrix}\right)\oplus \left(\begin{matrix}
0\\
0
    \end{matrix}\right) 
     &=-2\beta i^\oplus\left( \begin{matrix}
    \dfrac{ \sinh{(\frac{\pi}{\beta}(x-a))}\sinh{(\frac{\pi}{\beta}(b-x))}}{\sinh{(\frac{\pi}{\beta}(b-a))}} g(x)\\
    \left[ \dfrac{ \sinh{(\frac{\pi}{\beta}(x-a))}\sinh{(\frac{\pi}{\beta}(b-x))}}{\sinh{(\frac{\pi}{\beta}(b-a))}}  f'(x)\right]'
    \end{matrix}
 \right)\oplus \left(\begin{matrix}
0\\
0
    \end{matrix}\right) 
    \label{Ibetamodularhamiltonian2D}
\end{align}
The relative entropy of a coherente state in the thermal state representation is,
\begin{equation}
        S_{H((a,b))}^{(\beta)}((f,g))=2\beta\int_a^b  \dfrac{ \sinh{(\frac{\pi}{\beta}(x-a))}\sinh{(\frac{\pi}{\beta}(b-x))}}{\sinh{(\frac{\pi}{\beta}(b-a))}}  T_{00}(x)dx
        \label{Ibetarelativeentropy2D}
\end{equation}
This expression confirms, at least for this model, the expectation that in a CFT the relative entropy of coherent states on a finite interval has this form, where the dependence on the model enters only in $T_{00}$. Because of this, the bounds obtained earlier for the chiral model also hold in this case. The Bekenstein-like bound reads, 
\begin{equation}
    S_{(a,
     b)}((f,g))\leq \pi \frac{L}{2}\,\left( \frac{\tanh\left(\frac{\pi}{\beta}\frac{L}{2}\right) } {\frac{\pi}{\beta}\frac{L}{2}}\right) \int_a^b T_{00}(x)\,dx \leq \pi \frac{L}{2}\, \int_a^b T_{00}(x)\,dx.
\end{equation}
While the QNEC-like bound is,
\begin{equation}
    \frac{d^2}{dL^2}S_{I(L)}((f,g))\geq -\frac{\pi^2}{\beta \sinh^3(L\frac{\pi}{\beta})} \int_a^b \resizebox{.5\hsize}{!}{$\left[\left(\cosh( \frac{\pi}{\beta}L)-1\right)^2+2\sinh^2\left(\frac{\pi}{\beta}(x-c)\right)\left(1+\cosh^2(\frac{\pi}{\beta}L)\right)\right]$}T_{00}(x)\,dx.
\end{equation}

\section{Conclusions}

We have extended the relative entropy on $\mathbb{R}_-$ with $T\geq 0$ of {\footnotesize\color{blue} [BCD '22]} to a bounded interval (see \eqref{Ibetarelativeentropy}). In order to achieve this, we found a unitary in the thermal Hilbert space implementing a rotation. Such unitary may turn out to be useful for other related computations.

From the relative entropy \eqref{Ibetarelativeentropy} a Bekenstein-like bound and a QNEC-like bound can be observed. There is however a violation of the QNEC $S''>0$, and all of this is in agreement with \cite{Bostelmann:2020srs}. For the vacuum case, given an energy $E$ we can find a family of functions $f_{n}\in H(I)$ such that $S_{I}''(f_{n})$  given in 
 \eqref{QNECvacuum} goes to zero (just concentrating  the energy density closer and closer around the center of the interval), thus making the QNEC violation as small as desired. On the contrary,  in the thermal case  this is not possible because there is always a bound for the violation of the QNEC given by
\[ S''_{I}(f_n)\rightarrow -\dfrac{\pi^2}{\beta \sinh^3(L\frac{\pi}{\beta})}\left(\cosh( \frac{\pi}{\beta}L)-1\right)^2 E<0\] despite how the energy density is distributed (see \eqref{IbetaQNECbound}). 
 
The computations in the context of a thermal $U(1)$  current left a clear path to analyse the case of a thermal state of the free massless boson in $1+1$ dimensions  restricted to a causal diamond. In the last Section we obtained the modular Hamiltonian \eqref{Ibetamodularhamiltonian2D} and relative entropy \eqref{Ibetarelativeentropy2D} at finite temperature in 1+1 dimensions, with analogous bounds as in the chiral case. In principle most of these techniques could be used for the massless boson in higher dimensions and also the free massive boson in $d+1$ dimensions with $T>0$ \cite{garbarzpalau}. In addition, it would  be very interesting to extend the formalism to include non-coherent states, although this seems a much more complicated affair. 

\section*{Acknowledgements}

We would like to thank David Blanco and Guillem Pérez-Nadal  and specially both Henning Bostelmann for correspondance regarding  \cite{Bostelmann:2020srs} and Yoh Tanimoto for reading a preliminary version of the manuscript and providing valuable feedback. This work was partially supported by grants PIP and PICT from CONICET and ANPCyT. The work of G.P. is supported by an UBACYT scholarship from the University of Buenos Aires. 

\bibliographystyle{toine}
\bibliography{biblio}{}
\end{document}